\tikzset{
 	block/.style = {draw, thick, rectangle, minimum height = 3em,
	minimum width = 3em},
	sum/.style = {draw, circle, node distance = 2cm}, 
	input/.style = {coordinate}, 
 	output/.style = {coordinate} 
}
\newtheorem{theorem}{Theorem}
\newtheorem{proof}{Proof}
\newtheorem{example}{Example}
\newtheorem{assumption}{Assumption}
\newcommand{\R}{{\mathbb{R}}}
\newcommand{\ie}{{i.e., }}
\newcommand{\T}{{\text{T}}}
\begin{document}

\title{Fundamental Limits and Tradeoffs  in Autocatalytic Pathways}
\author{Milad Siami$^1$, Nader Motee$^2$, Gentian Buzi$^3$, Bassam Bamieh$^4$,  Mustafa Khammash$^5$ and John C. Doyle$^6$ 

\thanks{$^{1}$ M. Siami is with the Institute for Data, Systems, and Society, Massachusetts Institute of Technology, Cambridge, MA 02319. Email:  {\tt\small siami@mit.edu}.}
\thanks{$^2$ N. Motee is with the Department of Mechanical Engineering and Mechanics, Packard Laboratory, Lehigh University, Bethlehem, PA 18015. Email addresses:  {\tt\small motee@lehigh.edu} }
\thanks{$^3$ G. Buzi is with State University of New York at Fredonia. Email addresses:  {\tt\small Gentian.Buzi@fredonia.edu}}
\thanks{$^4$ B.~Bamieh is with the Department of Mechanical Engineering, University of California, Santa Barbara, CA 93106, USA.  Email address: {\tt\small bamieh@engineering.ucsb.edu.}}
\thanks{$^5$ Mustafa Khammash is with the Department of Biosystems Science and Engineering, ETH Zurich, Mattenstrasse 26, 4058, Basel, Switzerland. Email address: {\tt\small mustafa.khammash@bsse.ethz.ch}}
\thanks{$^6$ J. C.~Doyle is with the Control and Dynamical Systems, California Institute of Technology, Pasadena, CA, 91125, USA. Email address: {\tt\small doyle@cds.caltech.edu}}
}
\maketitle
\thispagestyle{empty}
\pagestyle{empty}

\begin{abstract}
This paper develops some basic principles to study autocatalytic networks and exploit their structural properties in order to characterize their inherent  fundamental limits and tradeoffs. In a dynamical system with autocatalytic structure, the system's output is necessary to catalyze its own production. We consider a simplified model of Glycolysis pathway as our motivating application. First, the properties of these class of pathways are investigated through a simplified two-state model, which is obtained by lumping all the intermediate reactions into a single intermediate reaction. Then, we  generalize our results to autocatalytic pathways that are composed of a chain of enzymatically catalyzed intermediate reactions. We explicitly derive a hard limit on the minimum achievable $\mathcal L_2$-gain disturbance attenuation and a hard limit on its minimum required output energy. Finally, we show how these resulting hard limits lead to some fundamental tradeoffs between transient and steady-state behavior of the network and its net production.

\end{abstract}

\section{Introduction}
\allowdisplaybreaks

Dynamic autocatalysis mechanisms are inherent to several real-world dynamical networks including most of the planet's cells from bacteria to human, engineered networks as well as economic systems \cite{chandra11, Autocatalytic, buzi11, buzi2010control}. In an interconnected control system with autocatalytic structure, the system's product (output) is necessary to power and catalyze its own production. The destabilizing effects of such ``positive'' autocatalytic feedback can be countered by negative regulatory feedback. There have been some recent interest to study models of glycolysis pathway as  an example of an autocatalytic dynamical network in biology that generates adenosine triphospate (ATP), which is the cell's energy currency and is consumed by different mechanisms in the cell \cite{chandra11,motee6}. Other examples of autocatalytic networks include engineered power grids whose machinery are maintained using their own energy product as well as financial systems which operate based on generating monetary profits by investing money in the market. Recent results show that there can be severe theoretical hard limits on the resulting performance and robustness in autocatalytic dynamical networks. It is shown that the consequence of such tradeoffs stems from the autocatalytic structure of the system \cite{chandra11,BuziD10,motee6}. 

	The recent interest in understanding fundamental limitations of feedback in complex interconnected dynamical networks from biological systems and physics to engineering and economics has created a paradigm shift in the way systems are analyzed, designed, and built. Typical examples of such complex networks include metabolic pathways \cite{Goldbeter96}, vehicular platoons \cite{Jovanovic05,Jovanovic08,Raza96,Seiler04,Swaroop99}, arrays of micro-mirrors \cite{Neilson01}, micro-cantilevers \cite{Napoli99}, and smart power grids. These systems are diverse in their detailed physical behavior, however, they share an important common feature that all of them consist of an interconnection of a large number of systems that affect each others' dynamics. There have been some progress in characterization of fundamental limitations of feedback for some classes of dynamical networks. For example, only to name a few, reference \cite{Middleton10} gives conditions for string instability in an array of linear time-invariant autonomous vehicles with communication constraints,  \cite{Vinay07} provides a lower bound on the achievable quality of disturbance rejection using a decentralized controller for stable discrete time linear systems with time delays, \cite{Padmasola06} studies the performance of spatially invariant plants interconnected through a static network, \cite{Leong} studies the time domain waterbed effect for single state linear systems and shows time domain analysis is useful for understanding the waterbed effect with respect to $l_1$-norm optimal control, and \cite{Siami14arxiv} investigates performance deterioration in linear dynamical networks subject to external stochastic disturbances and quantifies several explicit inherent fundamental limits on the best achievable levels of performance and show that these limits of performance are emerged only due to the specific interconnection topology of the coupling graphs. Furthermore, \cite{Siami14arxiv} characterizes some of the inherent fundamental tradeoffs between notions of sparsity and performance in linear consensus networks.

	Most of the above cited research on fundamental limitations of feedback in interconnected dynamical systems have been focused on networks with linear time-invariant dynamics. The main motivation of this paper stems from a recent work presented in \cite{chandra11} that presents that glycolysis oscillation can be an indirect effect of fundamental tradeoffs in this system. The results of this work is based on a linearized model of a two-state model of glycolysis pathway and tradeoffs are stated using Bode's results. In this paper, our approach to characterize hard limits is essentially different in the sense that it uses higher dimensional and more detailed nonlinear models of the pathway. We interpret fundamental limitations of feedback by using hard limits (lower bounds) on $\mathcal L_2$-gain disturbance attenuation of the system \cite{Middleton03, Schwartz96, Schaft}, and $\mathcal L_2$-norm squared of the output of the system \cite{motee6,seron99}.

	In this paper, our goal is to build upon our previous results \cite{motee6, siami12} and develop methods to characterize hard limits on performance of autocatalytic pathways. First, we study the properties of such pathways through a two-state model, which obtained by lumping all the intermediate reactions into a single intermediate reaction (Fig. \ref{fig_glycolysis-1}). Then, we  generalize our results to autocatalytic pathways that are composed of a chain of enzymatically catalyzed intermediate reactions (Fig. \ref{fig_glycolysis-2}). We show that due to the existence of autocatalysis in the system (which is a biochemical necessity)
, a fundamental tradeoff between a notion of fragility and net production of the pathway emerges. Also, we show that as the number of intermediate reactions grows, the price for better performance increases. 


\section{Minimal Autocatalytic Pathway Model}
\subsection{Two-State Model}
\label{section2}
	{We consider autocatalysis mechanism in a glycolysis pathway. The central role of glycolysis is to consume glucose and produce adenosine triphosphate (ATP), the cell's energy currency. Similar to many other engineered systems whose machinery runs on its own energy product, the glycolysis reaction is autocatalytic. The ATP molecule contains three phosphate groups and energy is stored in the bonds between these phosphate groups. Two molecules of ATP are consumed in the early steps (hexokinase, phosphofructokinase/PFK) and four ATPs are generated as pyruvate is produced. PFK is also regulated such that it is activated when the adenosine monophosphate (AMP)/ATP ratio is low; hence it is inhibited by high cellular ATP concentration \cite{Goldbeter96,Selkov75}. This pattern of product inhibition is common in metabolic pathways. We refer to \cite{chandra11} for a detailed discussion. 

	Experimental observations in Saccharomyces cerevisiae suggest that there are two synchronized pools of oscillating metabolites \cite{Hynne01}. Metabolites upstream and downstream of phosphofructokinase (PFK) have $180$ degrees phase difference, suggesting that a two-dimensional model incorporating PFK dynamics might capture some aspects of system dynamics \cite{Betz65}, and indeed, such simplified models qualitatively reproduce the experimental behavior \cite{Goldbeter96,Selkov75}. }

We assume that a lumped variable $x$ can encapsulate relevant information of all intermediate metabolites and consider a minimal model with three biochemical reactions as follows
\begin{eqnarray}
	\begin{cases}
	\begin{matrix}
	\text{PFK Reaction:}& s ~+~ \alpha y  \xrightarrow{~R_{\text{PFK}}~} ~ x, \\
	\text{PK Reaction:}& x~ \xrightarrow{~R_{\text{PK}}~}  ~ (\alpha + 1) y ~+~ x',\\
	\text{Consumption:}& y   ~ \xrightarrow{~R_{\text{CONS}}~} ~ \varnothing.
	\end{matrix}
	\end{cases}
	\label{reaction-two}
\end{eqnarray}
In the PFK reaction, $s$ is some precursor and source of energy for the pathway with no dynamics associated, $y$ denotes the product of the pathway (ATP), $x$ is intermediate metabolites, $x'$ is one of the by-products of the second biochemical reaction (pyruvate kinase/PK). $\varnothing$ is a null state, $\alpha>0$ is the number of $y$ molecules that are invested in the pathway, and $\alpha + 1$ is the number of $y$ molecules produced. $A \xrightarrow{~k~} B$ denotes a chemical reaction that converts the chemical species $A$ to the chemical species $B$ at rate $k$.  The PFK reaction consumes $\alpha$ molecules of ATP with allosteric inhibition by ATP.
In the second reaction, pyruvate kinase (PK) produces $\alpha+1$ molecules of ATP for a net production of one unit\footnote{For the sake of simplicity of notations, we normalize the reactions such that consumption of one molecule of $y$ produces two molecules of $y$, which is equivalent to $\alpha=1$. }. The third reaction models the cell's consumption of ATP. We refer to Fig. \ref{fig_glycolysis-1} for a schematic diagram of biochemical reactions in the minimal model.

	A set of ordinary differential equations that govern the changes in concentrations $x$ and $y$ can be written as
\begin{eqnarray}
	\begin{cases}
	\dot{x} \, = \, R_{\text{PFK}}(y) \,-\, R_{\text{PK}}(x,y),\\
	\dot{y} \, = \, -\alpha \, R_{\text{PFK}}(y) \,+\, (\alpha+1) \, R_{\text{PK}}(x,y) \,-\, R_{\text{CONS}}(y).
	\end{cases}
	\label{model-2-s}
\end{eqnarray}
The reaction rates are chosen according to the following steps. For the PFK reaction, we have
\begin{equation}
	R_{\text{PFK}}(y)~=~\frac{2 y^{a}}{1+ y^{2h}},
	\label{eq-150}
\end{equation}
where $a$ models cooperativity of ATP binding to PFK and $h$ is the feedback strength of ATP on PFK. For the PK reaction, we use
\begin{equation}
	R_{\text{PK}}(x,y)~=~\frac{2 k x}{1+ y^{2g}},
\end{equation}
where $k$ is intermediate reaction rate and $g$ is the feedback strength of ATP on PK. The coefficients 2 in the numerator and feedback coefficient of the reaction rates come from the normalization. Finally,  the product $y$ is consumed by basal consumption rate of $1+\delta$, \ie
\begin{equation}
	R_{\text{CONS}}~=~1~+~ \delta
	\label{model-dist}
\end{equation}
in which $\delta$ is the perturbation in ATP consumption{\footnote{~In Example \ref{ex-2} of Section \ref{sec2}, the case of $R_{\rm CONS}~=~ k_y y + \delta$ is also studied.}.  In Section \ref{sec2}, we consider more general reaction rates which are suitable for a broad class of chemical kinetics models such as Michaelis-Menten and mass-action. Reaction rates (\ref{eq-150})-(\ref{model-dist}) are consistent with biological intuition and experimental data in the case of the glycolysis pathway \cite{chandra11}.
In the final reaction, the effect of an external time--varying disturbance $\delta$ on ATP demand is considered. 
The product of the pathway, ATP, inhibits the enzyme that catalyzes the first and second reactions, and the exponents $h$ and $g$ capture the strength of these inhibitions, respectively. By combing all steps, the nonlinear dynamics of  (\ref{model-2-s})-(\ref{model-dist}) can be cast as 
\begin{equation}
	\begin{cases}
	\dot{x}_1 ~ = ~ \frac{2 x_2^{a}}{1+ x_2^{2h}} ~-~ \frac{2 k x_1}{1+ x_2^{2g}},\\
	\\
	\dot{x}_2 ~ = ~ -\alpha \frac{2 x_2^{a}}{1+ x_2^{2h}} ~+~ (\alpha+1)\frac{2 k x_1}{1+ x_2^{2g}} ~-~ \left ({1+\delta}\right ), 
	\end{cases}
	\label{gly-two}
\end{equation}
with output variable
\begin{equation}
		y=x_2
\end{equation}
for $x_1,x_2 \geq 0$. 

In order to make several comparisons possible, we normalize all concentrations such that the equilibrium point of the unperturbed  system (i.e., when $\delta = 0$)  becomes 
\begin{equation}
	\left[\begin{array}{c}
	x_1^* \\
	x_2^*
	\end{array}\right]
	\,=\, \left[\begin{array}{c}
	\frac{1}{k} \\
	1
	\end{array}\right].
	\label{fixed-point}
\end{equation}
This can be achieved by nondimensionalizing the model.
%
%
%
%
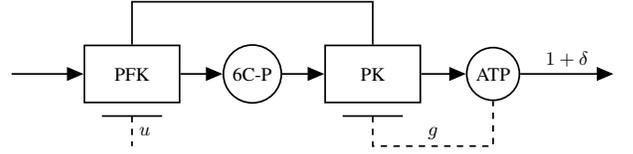
\begin{figure}
        \begin{center}
        \scalebox{.8}{
	\begin{tikzpicture}[auto, thick, node distance=2cm, >=triangle 45]
	\draw
	node at (0,0)[right=-3mm]{}
	node [input, name=input] {} 
	node [draw, thick, rectangle, minimum height = 3em,
    	minimum width = 5em, right of=input] (PFK) {PFK}
	node [sum, right of=PFK] (inter) {6C-P}
	node [draw, thick, rectangle, minimum height = 3em,
 	 minimum width = 5em, right of=inter] (PK) {PK}
	node [sum, right of=PK] (ATP) {ATP}
   	node [output, right of=ATP] (output) {} ;
	\draw[->](input) -- node {} (PFK);
 	\draw[->](PFK) -- node {} (inter);
	\draw[->](inter) -- node {} (PK);
	\draw[->](PK) -- node {} (ATP);
	\draw[->](ATP) -- node {$1+\delta$} (output);
    	\draw [-] (PK) -- (6,1.2)--(2,1.2)--(PFK);
    	\draw [-] (1.5,-.7) -- (2.5,-.7);
    	\draw [-] (5.5,-.7)--(6.5,-.7);
    	\draw [dashed] (6,-.7)--(6,-1.2)--node {$g$} (8,-1.2)--(ATP);
    	\draw [dashed] (2,-.7)--node {$u$} (2,-1.2);
	\end{tikzpicture}}
        \end{center}
        \caption{A schematic diagram of the minimal glycolysis model. The constant  glucose input along with $\alpha$ ATP molecules produce a pool of intermediate metabolites, which then produces $\alpha+1$ ATP molecules.}
        \label{fig_glycolysis-1}
\end{figure}

	In the minimal glycolysis model (\ref{gly-two}) expression $\frac{2}{1+ x_2^{2h}}$ can be interpreted as the effect of the regulatory feedback control mechanism employed by nature, which captures inhibition of the catalyzing enzyme. This observation suggests the following control system model for the minimal model of the glycolysis pathway
\begin{eqnarray}
	\hspace{-.5cm}\begin{bmatrix}
	\dot{x}_1\\
	\dot{x}_2
	\end{bmatrix} &=& \begin{bmatrix}
	{1}\\
	{-\alpha}
	\end{bmatrix} x_2^a u+
	\begin{bmatrix}
	{-1}\\
	{\alpha+1}
	\end{bmatrix} \frac{2 k x_1}{1+ x_2^{2g}}  \label{cont-gly1} - \begin{bmatrix}
	{0}\\
	{1+\delta}
	\end{bmatrix},
	\label{cont-gly-1} 
\end{eqnarray}
where $u$ is the control input and captures the effect of a general feedback control mechanism. Our primary motivation  behind development and analysis of such control system models for this metabolic pathways is to rigorously show that existing fundamental tradeoffs in such models are truly unavoidable and independent of control mechanisms used to regulate such pathways. For glycolysis autocatalytic pathways, the results of the following sections assert that the existing fundamental limits on performance  of the pathway depend only on  the autocatalytic structure of the underlying network.

{\it Stability properties of this model:}
According to \cite{chandra11}, the equilibrium point (\ref{fixed-point}) of two-state glycolysis model (\ref{gly-two}) is stable if 
{\[ 0~<~ h-a~<~\frac{k+g(1+\alpha)}{\alpha}.\] }
Our aim is to show that for any stabilizing control input there is a fundamental  limit on the best achievable performance by the closed-loop pathway.

\subsection{Performance Measures}
We quantify fundamental limits on performance of the glycolysis pathway via two different approaches.  

\subsubsection{$\mathcal{L}_2$-Gain from Exogenous Disturbance Input to Output}
	{In order to quantify lower bounds on the best achievable closed-loop performance of the two-state model (\ref{cont-gly1}), we need to solve the corresponding regional state feedback $\mathcal {L}_2$-gain disturbance attenuation problem with guaranteed stability. This problem consists of determining a control law $u$ such that the closed-loop system has the following properties: (i) the zero equilibrium of the system (\ref{cont-gly1}) with $\delta(t) = 0$ for all $ t \geq 0$ is asymptotically stable with region of attraction containing $\Omega$ (an open set containing the equilibrium point), (ii) 
for every $\delta \in {L}_2 (0, T )$ such that the trajectories of the system remain in $\Omega$, the $\mathcal {L}_2$-gain of the system from $\delta$ to $y$ is less than or equal to $\gamma$, \ie 
	\begin{equation}
		\int_{0}^{ T}(y(t)-y^*)^2 dt ~\leq ~\gamma^2 \int_{0}^{T}\delta^2(t) dt
		\label{problem}
	\end{equation}
for all $T\geq 0$ and zero initial conditions. 

It is well-known that there exists a solution to the static state feedback $\mathcal {L}_2$-gain disturbance attenuation problem with guaranteed stability, in some neighborhood of the equilibrium point, if there exists a smooth positive definite solution of the corresponding Hamilton-Jacobi inequality; we refer to \cite{Middleton03, Schaft} for more details.}

{The simplest robust performance requirement for model (\ref{cont-gly-1}) is that the concentration of $y$ (i.e., ATPs) remains nearly constant when there is a small constant disturbance in ATP consumption $\delta$ (see \cite{motee6,chandra11}). However, even temporary ATP depletion can result in cell death. Therefore, we are interested in a more complete picture of the transient response to external disturbances. We show that there exists a hard limit on the best achievable disturbance attenuation, which we denoted it by $\gamma ^{*}$, for system (\ref{cont-gly1}) such that the problem of disturbance attenuation (\ref{problem}) with internal stability is solvable for all $\gamma > \gamma^{*}$, but not for all $\gamma < \gamma^{*}$.
For a linear system, it is known that the optimal disturbance attenuation can be calculated using  zero-dynamics of the system \cite{Middleton03,seron99}. There is no fundamental limit on performance if and only if exogenous   disturbance $\delta$ does not influence the unstable part of the zero-dynamics of the system (as it is defined in \cite{Schwartz96} for nonlinear systems).}

\subsubsection{$\mathcal{L}_2$-Norm or Total Energy of the Output}
	{We characterize fundamental limitations of feedback for system (\ref{cont-gly1}) with initial condition $x(0)=x_0$ and zero external disturbances (i.e., $\delta(t)=0$) by considering the corresponding cheap optimal control problem. This case consists of finding a stabilizing state feedback control which minimizes the functional
	\begin{equation}
		J_{\epsilon}(x_0;u)~=~\frac{1}{2}~\int_{0}^{\infty}~\big[~\left( y(t)-y^* \right)^2 ~+~ \epsilon^2 \left(u(t)-u^*\right)^2 ~\big]~dt, \label{cheap-cost}
	\end{equation}
when $\epsilon$ is a small positive number. As $\epsilon \rightarrow 0$, the optimal value $J^*_{\epsilon}(x_0)$ tends to $J^*_0(x_0)$, the ideal performance of the system. It is well-known (e.g., see \cite{sepulchre97}, page $91$) that this problem has a solution if there exists a positive semidefinite optimal value function which satisfies the corresponding Hamilton--Jacobi-Bellman equation (HJBE). The interesting fact is that the ideal performance  is indeed a hard limit on performance of system (\ref{cont-gly1}). It is known that for a specific class of systems the ideal performance is the optimal value of the minimum energy problem for the zero-dynamics of the system (see \cite{seron99} for more details). The ideal performance (hard limit function) is zero if and only if the system has an asymptotically stable zero-dynamics subsystem.}

\subsection{Fundamental limits on the Performance Measures}

\subsubsection{$\mathcal L_2$-Gain Disturbance Attenuation}
In the following, it is shown that there exists a hard limit on the best achievable degrees of disturbance attenuation for system (\ref{cont-gly-1}). 

\begin{theorem}\label{theorem-01}
	Consider the optimal $\mathcal{L}_2$-gain disturbance attenuation problem for the minimal glycolysis model (\ref{cont-gly-1}). Then, the best achievable disturbance attenuation gain $\gamma ^{*}$ for system (\ref{cont-gly-1}) satisfies the following inequality
	\begin{equation}
		\gamma^* ~\ge ~\mathbf{\Gamma}(\alpha,k,g)
		\label{roboust}
	\end{equation}
and the hard limit function is given by
	\begin{equation}
		\mathbf{\Gamma}(\alpha,k,g)~=~\frac{\alpha}{k+g\alpha}.
	\end{equation}
\end{theorem}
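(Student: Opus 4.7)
The plan is to lower-bound $\gamma^*$ via the Jacobian linearization of (\ref{cont-gly-1}) at the equilibrium (\ref{fixed-point}), then invoke the classical $H_\infty$ interpolation constraint forced by a right-half-plane (RHP) zero of the $u \mapsto y$ transfer function. The regional nature of the problem in (\ref{problem}) guarantees that arbitrarily small-amplitude disturbances keep trajectories inside any prescribed neighborhood $\Omega$ of the equilibrium, so the best achievable nonlinear $\mathcal L_2$-gain is bounded below by the $H_\infty$ norm achievable via state feedback on the linearized plant.

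First, I would linearize (\ref{cont-gly-1}) about $x^* = (1/k,\, 1)^T$ with nominal input $u^* = 1$ and $\delta = 0$. Evaluating the Jacobians of $x_2^a u$ and $2kx_1/(1+x_2^{2g})$ at $x^*$ yields a SISO LTI system whose control and disturbance vectors inherit the opposing factors of autocatalysis, namely $B_u = [1,\, -\alpha]^T$ and $B_\delta = [0,\, -1]^T$, with output row $C = [0,\, 1]$ and a state matrix $A$ depending on $\alpha, k, a, g$. The exponent $h$ drops out because the PFK inhibition has been absorbed into $u$.

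Second, I would compute $G_u(s) = C(sI - A)^{-1} B_u$. Exploiting $B_u = [1,\, -\alpha]^T$, its numerator collapses to $k - \alpha s$, exhibiting a single RHP zero at $z = k/\alpha > 0$. Independent of the regulatory parameters $a$ and $g$, this zero is a direct algebraic fingerprint of the autocatalysis. Next, I would evaluate $G_\delta(s) = C(sI - A)^{-1} B_\delta$ at $s = z$; after simplifying $\det(zI - A)$, every occurrence of $a$ cancels and one arrives at $|G_\delta(z)| = \alpha/(k + g\alpha)$.

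Third, I would invoke the interpolation argument. The matrix inversion lemma gives, for any state-feedback gain $K$,
\begin{equation*}
G_{\mathrm{cl}}(s) \,=\, G_\delta(s) \,+\, G_u(s)\,\frac{K(sI-A)^{-1} B_\delta}{1 - K(sI-A)^{-1} B_u},
\end{equation*}
so $G_{\mathrm{cl}}(z) = G_\delta(z)$ because $G_u(z) = 0$; that is, state feedback cannot reshape the closed-loop response at the RHP zero. For any stabilizing $K$, $G_{\mathrm{cl}}$ is analytic in the closed right half-plane, so the maximum-modulus principle yields $\|G_{\mathrm{cl}}\|_\infty \ge |G_{\mathrm{cl}}(z)| = \alpha/(k + g\alpha)$, which is exactly $\mathbf{\Gamma}(\alpha, k, g)$.

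The main obstacle I expect is the passage from this linearized $H_\infty$ bound back to the nonlinear $\mathcal L_2$-gain bound (\ref{roboust}). The standard small-signal reduction $\delta(t) = \epsilon \eta(t)$ with $\epsilon \to 0$ keeps the trajectory inside $\Omega$ and extracts the Jacobian input-output map, so any nonlinear feedback achieving gain $\gamma$ on $\Omega$ induces a linear feedback with $H_\infty$ norm at most $\gamma$; one need only verify that admissible feedback laws are regular enough at the equilibrium to possess a Jacobian, which is guaranteed by the smoothness assumptions accompanying the Hamilton--Jacobi formulation in \cite{Schaft}.
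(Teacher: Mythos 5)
Your proposal is correct and reaches the paper's bound, but by a genuinely different (and more self-contained) route. The paper first performs the change of coordinates $z = x_1 + \tfrac{1}{\alpha}x_2$ to put the system into normal form, linearizes the resulting zero dynamics $\dot{\bar z} = \tfrac{k}{\alpha}\bar z - \tfrac{g\alpha+k}{\alpha^2}\bar y - \tfrac{1}{\alpha}\delta$, and then cites \cite{Scherer92,Schwartz96} for the closed-form optimal attenuation level $\gamma_L^* = \alpha/(k+g\alpha)$ of this scalar unstable subsystem, finally invoking \cite[Section 8.4]{Schaft2000} for the statement that the linearized level lower-bounds the nonlinear one. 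You instead stay in the frequency domain: the RHP zero $z = k/\alpha$ of $G_u(s)$ (numerator $k-\alpha s$) is exactly the unstable pole of the paper's zero dynamics, and your interpolation constraint $T_{\delta y}(z) = G_\delta(z)$ together with the maximum-modulus principle replaces the cited formula by an explicit elementary argument; your computation $|G_\delta(k/\alpha)| = \alpha/(k+g\alpha)$, with the cooperativity exponent $a$ cancelling, checks out. What your version buys is transparency --- the reader sees exactly where the number comes from --- at the price of two items you should make explicit: (i) the interpolation identity must hold for arbitrary internally stabilizing linear (possibly dynamic) state feedback, not just static gains $u=Kx$; this follows from $T_{\delta y} = G_\delta + G_u T_{\delta u}$ with $T_{\delta u}$ stable and hence analytic at $z$, so the value at $z$ is pinned regardless of the controller; and (ii) the linear-to-nonlinear reduction, which you correctly flag as the main obstacle, is exactly the step the paper outsources to \cite[Section 8.4]{Schaft2000} (a nonlinear closed loop with regional $\mathcal L_2$-gain $\le \gamma$ forces its linearization to have $H_\infty$ norm $\le \gamma$), so citing that result closes the gap more cleanly than the small-signal heuristic.
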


\begin{proof}
	We recall that the optimal value of the achievable disturbance attenuation level $\gamma^*$ is a number with the property that the problem of disturbance attenuation with internal stability is locally solvable for each prescribed level of attenuation $\gamma > \gamma^*$ and not for $\gamma < \gamma^*$. 
In the first step, we introduce a new auxiliary variable $z=x_1+\frac{1}{\alpha}x_2$. By transforming the dynamics of the system using the following change of coordinates 
	\begin{equation}
        \left[\begin{array}{c}
        y \\
        z
        \end{array}\right]=\left[\begin{array}{cc}
        0 & 1 \\
        1 & \frac{1}{\alpha}
        \end{array}\right] \left[\begin{array}{c}
        x_1 \\
        x_2
        \end{array}\right],
	\end{equation}
we obtain the following form
	\begin{eqnarray}
		\begin{cases}
		\dot{y}~=~ - \frac{\alpha+1}{\alpha}\frac{2ky}{1+y^{2g}} + 
 (\alpha+1)\frac{2kz}{1+y^{2g}}-\alpha y^au-(1+\delta) \label{zero-gly0} \\
		\dot{z}~=~\frac{1}{\alpha}\frac{2 k z}{1+y^{2g}}-\frac{1}{\alpha^2}\frac{2 k y}{1+y^{2g}}-\frac{1}{\alpha}(1+\delta).
		\end{cases}
		\label{zero-gly}
	\end{eqnarray} 
%
Note that the optimal $\mathcal{L}_2$-gain disturbance attenuation of transformed system (\ref{zero-gly}) and the original system are the same. Based on \cite[Section 8.4]{Schaft2000} the optimal disturbance level for the linearized problem will provide a lower bound for the optimal disturbance of the nonlinear system. 
Furthermore, for the linear system this problem reduces to a disturbance attenuation problem for the zero dynamics with cost on the control input. Thus we consider the linearized zero dynamics of (\ref{zero-gly}) as follows
	\begin{equation}
		\dot{\bar z}~=~\frac{k}{\alpha}\bar z-\frac{g\alpha+k}{\alpha^2}\bar y-\frac{1}{\alpha}\delta, \label{linearized-1}
	\end{equation} 
where 
	\begin{eqnarray}
		\begin{cases}
		\bar z ~ = ~ z-z^* ~=~ z-(x^*+\frac{1}{\alpha}y^*) \\
		\bar y ~ = ~ y - y^*
		\end{cases}
	\end{eqnarray}
	We now calculate optimal disturbance attenuation problem (from $\delta$ to $y$) for the zero dynamics with cost on its control input $y$. For system (\ref{linearized-1}), the optimal value of $\gamma$ is given by (see \cite{Scherer92, Schwartz96} for more details)
	\begin{equation}
		\gamma_L^*~=~ \frac{\alpha}{k+g\alpha}.
		\label{gamma}
	\end{equation} 
Thus, we can conclude that 
\[\gamma^* ~\ge~ \gamma_L^*~=~\mathbf{H}(\alpha,k,g)~=~\frac{\alpha}{k+g\alpha}.\]
\end{proof}

	Theorem \ref{theorem-01} illustrates a tradeoff between robustness and efficiency (as measured by complexity and metabolic overhead). From (\ref{roboust}) the glycolysis mechanism is more robust efficient if $k$ and $g$ are large. On the other hand, large $k$ requires either a more efficient or a higher level of enzymes, and large $g$ requires a more complex allosterically controlled PK enzyme; both would increase the cell's metabolic load. 
The hard limit function $\mathbf{\Gamma}(\alpha,k,g)$ in Theorem \ref{theorem-01} is an increasing function of $\alpha$. This implies that increasing $\alpha$ (more energy investment for the same return) can result in worse performance. 
It is important to note that these results are consistent with results in \cite{chandra11}, where a linearized model with a different performance measure is used.

\subsubsection{Total Output Energy}
It is shown that there exists a hard limit on the best achievable ideal performance ($\mathcal{L}_2$-norm of the output) of system (\ref{cont-gly-1}). One can see that some minimum output energy (\ie ATP) is required to stabilize the unstable zero-dynamics (\ref{zero-gly}). This output energy represents the energetic cost of the cell to stabilize it to its steady-state. In the following theorem, we show that the minimum output energy is lower bounded by a constant which is only a function of the parameters and initial conditions of the glycolysis model. This hard limit is independent of the feedback control strategy used to stabilize the system.
\begin{theorem}\label{theorem-03}
	Suppose that the equilibrium of interest is given by (\ref{fixed-point}) and ${u^*}=1$. Then, there is a hard limit on the performance measure of the  unperturbed ($\delta = 0$) system (\ref{cont-gly-1}) in the following sense
	\begin{equation}
		\int_{0}^{\infty}~(y(t;u_0)-\bar{y})^2~dt~\geq~ \frac{\alpha^3k}{(g\alpha+k)^2}~z_0^2 +J(z_0;\alpha,k,g),
	\end{equation} 
where $z_0=\left(x(0)-{x^*}\right)+\frac{1}{\alpha}\left(y(0)-{y^*}\right)$, $u_0$ is an arbitrary stabilizing feedback control law for system (\ref{cont-gly-1}), $J(0;\alpha,k,g)=J(z;\alpha,k,0)=0$ and $|J(z;\alpha,k,g)| \leq c|z|^3$ on an open set $\Omega$ around the origin in $\R$.
\end{theorem}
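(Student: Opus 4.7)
The plan is to use the change of variables $z = x_1 + \frac{1}{\alpha}x_2$ from the proof of Theorem \ref{theorem-01} to isolate a one--dimensional unstable zero dynamics driven only by the output $y$, and then to lower--bound the output energy by the value of the associated minimum--energy control problem for this zero dynamics.

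First I would set $\delta = 0$ in (\ref{zero-gly}), so that the zero dynamics reads
\[ \dot{z} = \frac{1}{\alpha}\frac{2kz}{1+y^{2g}} - \frac{1}{\alpha^2}\frac{2ky}{1+y^{2g}} - \frac{1}{\alpha} =: f(z,y), \]
depending on the internal state only through the output $y$. Any feedback $u_0$ that stabilizes (\ref{cont-gly-1}) necessarily drives $z(t)\to z^*$, so the realized output trajectory is an admissible ``virtual control'' for this scalar system. This yields
\[ \int_{0}^{\infty}\bigl(y(t;u_0)-y^{*}\bigr)^{2}\,dt \;\ge\; V(z_0), \]
where $V(z_0)$ is the value function of the minimum--energy problem with running cost $(y-y^*)^2$ and boundary data $z(0)=z^*+z_0$, $z(\infty)=z^*$.

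Next I would characterize $V$ through its Hamilton--Jacobi equation
\[ \min_{y\in\R}\bigl\{(y-y^{*})^{2}+V'(z)\,f(z,y)\bigr\}=0, \qquad V(z^{*})=0. \]
Linearizing $f$ about $(z^*,y^*)$ reproduces (\ref{linearized-1}) with unstable pole $a=k/\alpha$ and input gain $b=-(g\alpha+k)/\alpha^{2}$; the quadratic coefficient of $V$ is then obtained from the positive solution of the associated algebraic Riccati equation, producing the stated hard--limit coefficient $\frac{\alpha^{3}k}{(g\alpha+k)^{2}}$. Collecting cubic and higher--order terms of $V$ into $J(z_{0};\alpha,k,g)$, local $C^{3}$ regularity of $V$ on a neighborhood $\Omega$ of $z^*$ follows from the stable--manifold theorem applied to the characteristic Hamiltonian system of the HJ equation at its hyperbolic equilibrium, which delivers the pointwise bound $|J(z;\alpha,k,g)|\le c|z|^{3}$.

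The two boundary properties of $J$ are then immediate: $J(0;\alpha,k,g)=0$ from $V(z^{*})=0$; and for $g=0$ the factor $1+y^{2g}$ collapses to the constant $2$, so that the zero dynamics becomes affine in $y$ and the minimum--energy problem reduces to a scalar linear--quadratic problem whose value function is exactly quadratic, giving $J(z;\alpha,k,0)\equiv 0$. The main technical obstacle will be producing a pointwise cubic remainder rather than a mere $o(|z|^{2})$ estimate: this will require the local smoothness of $V$ furnished by the stable--manifold argument above, combined with a careful third--order Taylor expansion of the control Hamiltonian $H(z,p) = \min_{y}\{(y-y^{*})^{2} + p\,f(z,y)\}$ near the equilibrium.
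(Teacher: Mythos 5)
Your argument is correct, and it reaches the theorem by a genuinely different route than the paper. The paper's proof works in the full $(y,z)$ coordinates: it poses the cheap optimal control problem with cost $J_{\epsilon}$, expands the HJB solution $\pi(y,z;\epsilon)$ via the Albrecht--Lukes power series, uses the Kwakernaak--Sivan block decomposition of the Riccati solution $P(\epsilon)$ in powers of $\epsilon$, and passes to the singular limit $\epsilon \to 0$ to identify the ideal performance with $\tfrac{1}{2}P_0 z_0^2$ plus higher-order terms. You bypass the cheap-control limit entirely: since the $z$-dynamics $\dot z = f(z,y)$ is unaffected by $u$ directly, the realized output of \emph{any} stabilizing feedback is an admissible input for the scalar minimum-energy problem on the zero dynamics, so $\int_0^\infty (y-y^*)^2\,dt \ge V(z_0)$ is immediate, and everything else is a local analysis of a scalar stationary HJ equation (Riccati for the quadratic part, stable-manifold regularity of the characteristic Hamiltonian flow for the pointwise cubic remainder, and exact LQ collapse when $g=0$). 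This is arguably cleaner: it avoids justifying the interchange of the limit $\epsilon \to 0$ with the infimum over controls (the paper's comparison $J_{\epsilon}(x_0;u_0) \ge J_{\epsilon}^*(x_0)$ is only informative when $\int (u_0-u^*)^2\,dt < \infty$), and it makes explicit that the hard limit is carried entirely by the unstable zero dynamics. The two routes meet at the same scalar Riccati equation $2ap = b^2 p^2$ with $a = k/\alpha$ and $b = -(g\alpha+k)/\alpha^2$. One bookkeeping point: that equation gives $p = 2a/b^2 = 2\alpha^3 k/(g\alpha+k)^2$, so the quadratic coefficient of your $V$ is actually twice the stated one; this is the same factor of two the paper absorbs through the $\tfrac{1}{2}$ in $J_{\epsilon}$ and its convention $\pi^{[2]} = \tfrac{1}{2}P_0 z_0^2$. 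Your bound is therefore \emph{stronger} than the stated inequality (the dropped nonnegative quadratic term only helps), so the theorem follows, but you should state the normalization explicitly so the displayed coefficient matches.
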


\begin{proof}
By introduction of a new variable $z=x_1+\frac{1}{\alpha}y$, we rewrite (\ref{cont-gly-1}) in the canonical form (\ref{zero-gly}). 
We denote by $\pi(y,z;\epsilon)$ the solution of the HJB PDE corresponding to the cheap optimal control problem to (\ref{cont-gly-1}). We apply the power series method \cite{Albrecht61,Lukes69} by first expanding $\pi(y,z;\epsilon)$ in series as follows
	\begin{equation}
		\pi(y,z;\epsilon)~=~ \pi^{[2]}(y,z;\epsilon) ~+~ \pi^{[3]}(y,z;\epsilon) ~+~ \ldots \label{power_series}
	\end{equation}
in which $k$th order term in the Taylor series expansion of $\pi(y,z;\epsilon)$ is denoted by $\pi^{[k]}(y,z;\epsilon)$. Then (\ref{power_series}) is plug into the corresponding HJB equation of the optimal cheap control problem. The first term in the series is
	\begin{equation*} 
		\pi^{[2]}(y,z;\epsilon) ~=~ \left[
                               \begin{array}{cc}
                                 y-y^* & z-z^* \\
                               \end{array}
                             \right] P(\epsilon) \left[
                                                   \begin{array}{c}
                                                     y-y^*  \\
                                                     z-z^*  \\
                                                   \end{array}
                                                 \right],
	 \end{equation*}
where $P(\epsilon)$ is the solution of algebraic Riccati equation to the cheap control problem for the linearized model $(A_0,B_0)$. It can be shown that $P(\epsilon)$ can be decomposed in the form of a series in $\epsilon$ (see \cite{KwakernaakS72} for more details)
	\begin{equation*}
		P(\epsilon)~=~\left[
                  \begin{array}{cc}
                    \epsilon P_1 & \epsilon P_2 \\
                    \epsilon P_2 & P_0+\epsilon P_3 \\
                  \end{array}
                \right] + \mathcal{O}(\epsilon^2).
     \end{equation*}
Since the pole of the zero-dynamics of the linearized model is located at the $\frac{k}{\alpha}$, we can verify that $P_0 =\frac{2\alpha^3k}{(g\alpha+k)^2}$. Therefore, it follows that $\pi^{[2]}(y,z;\epsilon)=\frac{\alpha^3k}{(g\alpha+k)^2}z_0^2+\mathcal{O}(\epsilon)$. 
We only explain the key steps. One can obtain governing partial differential equations for the higher-order terms $\pi^{[k]}(y,z;\epsilon)$ for $k \geq 3$ by equating the coefficients of  terms with the same order. It can be shown that $\pi^{[k]}(y,z)=\pi_0^{[k]}(z) + \epsilon \pi_1^{[k]}(y,z)+\mathcal{O}(\epsilon^2)$ for all $k \geq 3$. 
Then, by constructing approximation of the optimal control feedback by using computed Taylor series terms, one can prove that $\pi(y,z;\epsilon) \rightarrow \frac{\alpha^3k}{(g\alpha+k)^2}z_0^2+(\text{higher order terms in } z_0)$ as $\epsilon \rightarrow 0$. Thus, the ideal performance cost value is $\frac{\alpha^3k}{(g\alpha+k)^2}z_0^2+J(z_0;\alpha,k,g)$.

\end{proof}

According to Theorems \ref{theorem-01} and \ref{theorem-03}, a fundamental tradeoff between a notion of fragility and net production of the pathway emerges as follows:  increasing $\alpha$ (number of ATP molecules invested in the pathway), increases fragility of the network to small disturbances (based on Theorem \ref{theorem-01}) and it can result in undesirable transient behavior (based on Theorem \ref{theorem-03}). 
	The large fluctuation in the level of ATP is not desirable, if the level of ATP drops below some threshold, there will not be sufficient supply of ATP for different pathways in the cell and that can result to cell death.  


	

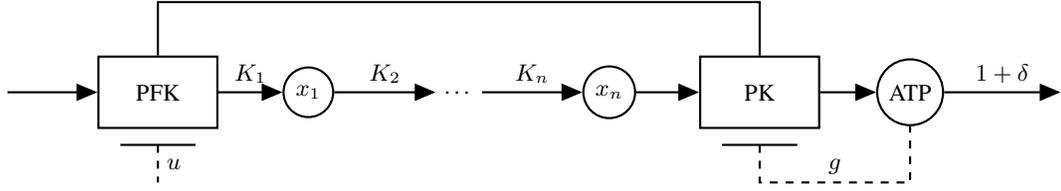
\begin{figure*}
        \begin{center}
        \scalebox{1}{
\begin{tikzpicture}[auto, thick, node distance=2cm, >=triangle 45]
\draw
	node at (0,0)[right=-3mm]{}
	node [input, name=input] {} 
	node [draw, thick, rectangle, minimum height = 3em,
    minimum width = 5em, right of=input] (PFK) {PFK}
	node [sum, right of=PFK] (x1) {$x_1$}
	node [right of=x1](noghte){\ldots}
	node [sum, right of=noghte] (xn) {$x_n$}
	node [draw, thick, rectangle, minimum height = 3em,
    minimum width = 5em, right of=xn] (PK) {PK}
	node [sum, right of=PK] (ATP) {ATP}
   	node [output, right of=ATP] (output) {};
	\draw[->](input) -- node {} (PFK);
 	\draw[->](PFK) -- node {$K_1$} (x1);
 	\draw[->](x1)-- node {$K_2$} (noghte);
 	\draw[->](noghte)-- node {$K_{n}$} (xn);
 	\draw[->](xn)-- node {} (PK);
	\draw[->](PK) -- node {} (ATP);
    \draw[->](ATP) -- node {$1+\delta$} (output);
    \draw [-] (PK) -- (10,1.2)--(2,1.2)--(PFK);
    \draw [-] (1.5,-.7) -- (2.5,-.7);
    \draw [-] (9.5,-.7)--(10.5,-.7);
    \draw [dashed] (10,-.7)--(10,-1.2)--node {$g$}(12,-1.2)--(ATP);
    \draw [dashed] (2,-.7)--node {$u$} (2,-1.2);
\end{tikzpicture}}
        \end{center}
        \caption{ A schematic diagram of a glycolysis pathway model with intermediate reactions. The constant  glucose input along with $\alpha$ ATP molecules produce a pool of intermediate metabolites, which then produces $\alpha+1$ ATP molecules.}
        \label{fig_glycolysis-2}
	\end{figure*}

\section{Autocatalytic Pathways With Multiple Intermediate Metabolite Reactions}
In Subsection {\ref{section2}}, we studied the property of such pathways with a two-state model (\ref{cont-gly-1}), which is obtained by lumping all the intermediate reactions into a single intermediate reaction.  In the next step, we consider autocatalytic pathways with multiple intermediate metabolite reactions as shown in Fig. \ref{fig_glycolysis-2}:
	\begin{eqnarray}
        \begin{cases}
        \begin{matrix}
        \text{PFK Reaction:}& s ~+~ \alpha y  \xrightarrow{~R_{\text{PFK}}~} ~ x_1, \\
        \text{Intermediates:}& x_1 \xrightarrow{~R_{\text{IR}}~} x_2 ~  \cdots   ~\xrightarrow{~R_{\text{IR}}~}x_{n},\\
        \text{PK Reaction:}& x_{n}~ \xrightarrow{~R_{\text{PK}}~}  ~ (\alpha + 1) y ~+~ x',\\
        \text{Consumption:}& y   ~ \xrightarrow{~R_{\text{CONS}}~} ~ \varnothing.
        \end{matrix}
        \end{cases}
        \label{reaction-n}
	\end{eqnarray}
A set of ordinary differential equations that govern the changes in concentrations of $x_i$ for $i=1, \ldots, n$ and $y$ can be obtained as follows
        \begin{eqnarray}
       	 \begin{cases}
        	\dot{x}_1 ~=~ R_{\text{PFK}}(y) \,-\, R_{\text{IR}}(x_1),
        	\\ \label{cont-gly3}
         	\dot{x}_2~ =~ R_{\text{IR}}(x_1)\,-\,R_{\text{IR}}(x_2),
        	\\
        	~~~~~ \vdots \\
         	\dot{x}_n~=~ R_{\text{IR}}(x_{n-1})\, - \, R_{\text{PK}}(x_n,y),  \\
         	~\dot{y}  ~ = ~(\alpha+1)R_{\text{PK}}(x_n,y)\, -\, \alpha R_{\text{PFK}}(y) \,-\, {R_{\text{CONS}}}
         	\end{cases}
        	\label{general-model}
        \end{eqnarray}
for $x_i \geq 0$ and $y \geq 0$. Our notations are similar to those of the two-state pathway model (\ref{reaction-two}). The reaction rates are given as follows
	\begin{eqnarray}
	\begin{cases}
		R_{\text{PFK}}(y)~=~\frac{2 y^{a}}{1+ y^{2h}},\\
		R_{\text{PK}}(x_n,y)~=~\frac{2 K_n x_n}{1+ y^{2g}},\\
		R_{\text{IR}}(x_i)~=~K_i x_i ~~\text{for}~~n=1,2,\ldots,n,\\
		R_{\text{CONS}}~=~1+\delta
		\label{rates}
		\end{cases}
	\end{eqnarray}
Furthermore, in the glycolysis model (\ref{general-model}), similar to the minimal model (\ref{cont-gly-1}), expression $\frac{2}{1+x^{2h}}$ can be interpreted as the effect of the regulatory feedback control mechanism employed by nature that captures inhibition of the catalyzing enzyme. Hence, we can derive a control system model for the autocatalytic pathway with multiple intermediate metabolite reactions as follows
\begin{small}
	\begin{equation}
        \begin{cases}
        \dot{x}_1 ~=~ y^a u~-~K_1x_1,\\
         \dot{x}_2~ =~ K_1{x_1}~ - ~ K_2x_2,\\
        ~~~~~ \vdots\\
         \dot{x}_n~=~ K_{n-1}{x_{n-1}}~ - ~ \frac{2K_n x_n}{1+y^{2g}},  \\
         \dot{x}_{n+1} ~ = ~(\alpha+1)\frac{2K_n x_n}{1+x_{n+1}^{2g}} ~ - ~ \alpha x_{n+1}^a  u ~- ~ {(1+\delta)},\\
         ~y~=~x_{n+1},
         \end{cases}
          \label{cascade-model} 
        \end{equation}
        \end{small}for $x_i \geq 0$ and $y \geq 0$. 
        In order to simplify our analysis and be able to calculate explicit formulae, we assume that $K:=K_1=\dots=K_n>0$. We normalize all concentrations such that unperturbed steady states become
	\begin{eqnarray}
        		y^*=x^*_{n+1}=1 ~~~\textrm{and}~~~  x_i=K^{-1}
        		\label{eq-point}
	\end{eqnarray}
for all $i=1,\ldots,n$. 





\subsection{$\mathcal L_2$-Gain Disturbance Attenuation}

	We extend our results in Theorem \ref{theorem-01} to  higher dimensional model of autocatalytic pathways. In the following theorem, we show that there exists a size-dependent hard limit on the best achievable disturbance attenuation for system (\ref{cascade-model}).

\begin{theorem}\label{theorem-02}
	Consider the optimal $\mathcal{L}_2$-gain disturbance attenuation problem for glycolysis model (\ref{cascade-model}). Then, the best achievable disturbance attenuation gain $\gamma ^{*}$ for system (\ref{cascade-model}) satisfies the following inequality
	\begin{equation}
		\gamma^* ~ \ge ~ \mathbf{\Gamma}(\alpha,K,g,n),
		\label{roboust-2}
	\end{equation}
where the hard limit function is given by
	\begin{eqnarray}
		&& \mathbf{\Gamma}(\alpha,K,g,n) = \nonumber  \\
		&&~~~~~~~\left[\left( K+g\alpha \left(\frac {\alpha+1}{\alpha}\right)^{\frac{n-1}{n}} \right) \left( \left(\frac {\alpha+1}{\alpha} \right)^{\frac{1}{n}}-1\right) \right]^{-1}. 
		\nonumber
\end{eqnarray}
\end{theorem}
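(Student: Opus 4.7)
The plan is to parallel the proof of Theorem \ref{theorem-01}: lower-bound the nonlinear $\gamma^*$ by the linearized optimal attenuation $\gamma_L^*$ (using \cite[Sec.~8.4]{Schaft2000}), and then reduce the linear problem to a cheap-sensitivity computation on the zero dynamics. The preliminary coordinate change $\tilde\xi_1 = x_1 + y/\alpha$ (exactly analogous to $z = x_1 + y/\alpha$ in the two-state case) cancels the control from $\dot x_1$, placing the system in normal form where $u$ enters only through $\dot y$. Linearizing the resulting $n$-dimensional $(\tilde\xi_1,\xi_2,\ldots,\xi_n)$-subsystem around the equilibrium (\ref{eq-point}) produces the zero dynamics.

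The payoff of the transformation is that the linearized zero-dynamics matrix $A_{zd}$ has a clean cyclic bidiagonal form: $-K$ on the diagonal, $K$ on the subdiagonal, and a single autocatalytic ``wrap-around'' entry $(\alpha+1)K/\alpha$ in position $(1,n)$, which is the algebraic trace of the $x_n \to y \to x_1$ feedback loop. Setting $\rho := ((\alpha+1)/\alpha)^{1/n}$, the shifted matrix $A_{zd}+KI$ satisfies $\lambda^n = (\alpha+1)/\alpha$, so the eigenvalues of $A_{zd}$ are $K(\rho\omega^k-1)$ with $\omega = e^{2\pi i/n}$. The unique unstable eigenvalue is $\lambda^* = K(\rho-1)$, and the corresponding left eigenvector $w = (1,\rho,\rho^{2},\ldots,\rho^{n-1})^{\top}$ is read off directly from the subdiagonal recursion.

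Because the unstable subspace is one-dimensional, the cheap-sensitivity reduction invoked in Theorem \ref{theorem-01} (cf.\ \cite{Scherer92,Schwartz96}) collapses to the scalar formula
\begin{equation*}
\gamma_L^* \;=\; \frac{|w^{\top} B_\delta|}{|w^{\top} B_\eta|},
\end{equation*}
where $B_\delta = (-1/\alpha, 0,\ldots,0)^{\top}$ is the disturbance channel and $B_\eta$ is the coupling of $\eta := y - y^*$ into the $(\tilde\xi_1,\xi_2,\ldots,\xi_n)$-subsystem. After the coordinate change, $B_\eta$ has only three nonzero entries, in positions $1$, $2$, and $n$, coming respectively from the $a\eta$ term in $\dot x_1$, the $-K\eta/\alpha$ term generated by rewriting $\xi_1 = \tilde\xi_1 - \eta/\alpha$ in $\dot\xi_2$, and the $g\eta$ term in $\dot x_n$. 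Using $\rho^n = (\alpha+1)/\alpha$, these three contributions telescope to $w^{\top} B_\eta = (1-\rho)(K + g\alpha\rho^{n-1})/\alpha$, and dividing by $|w^{\top} B_\delta| = 1/\alpha$ gives exactly $\mathbf{\Gamma}(\alpha,K,g,n)$; as a consistency check, $n=1$ recovers Theorem \ref{theorem-01}.

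The main obstacle I anticipate is algebraic rather than conceptual: three a priori unrelated entries of $B_\eta$ must conspire --- through the single identity $\rho^n = (\alpha+1)/\alpha$ --- to factor into the product $(1-\rho)\cdot(K+g\alpha\rho^{n-1})$. Careful sign-tracking through the $\xi_1 \mapsto \tilde\xi_1$ substitution is essential; without the position-$2$ contribution induced by that substitution, the telescoping fails and one obtains a formula with an extraneous $(\alpha+1)$ factor.
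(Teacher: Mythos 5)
Your proposal follows essentially the same route as the paper's proof: the same change of variable $z_1=x_1+\tfrac{1}{\alpha}y$, the same cyclic bidiagonal zero-dynamics matrix with wrap-around entry $\tfrac{\alpha+1}{\alpha}K$, the same unstable eigenvalue $K(\rho-1)$ with left eigenvector $(1,\rho,\dots,\rho^{n-1})^{\top}$, and the same scalar reduction via \cite{Scherer92,Schwartz96} with the lower bound on the nonlinear $\gamma^*$ coming from \cite{Schaft}; your value of $w^{\top}B_\eta$ and the final formula agree with the paper's. One bookkeeping slip: the first entry of $B_\eta$ does not come from an ``$a\eta$ term in $\dot x_1$'' --- the $y^a u$ term cancels exactly in the $z_1$-coordinate (so the exponent $a$ cannot appear anywhere in the zero dynamics) --- rather it equals $\tfrac{K}{\alpha}-\tfrac{\alpha+1}{\alpha}g$, the $\tfrac{K}{\alpha}$ from rewriting $x_1=z_1-\tfrac{1}{\alpha}y$ in $-Kx_1$ and the $-\tfrac{\alpha+1}{\alpha}g$ from the $y$-dependence of the PK rate entering $\dot z_1$ through $\tfrac{1}{\alpha}\dot y$; these are exactly the entries that make your telescoping identity $w^{\top}B_\eta=(1-\rho)(K+g\alpha\rho^{n-1})/\alpha$ come out.
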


\begin{proof}
	First, by  introducing a new variable $z_1=x_1+\frac{1}{\alpha}y$, we can cast the zero-dynamics of (\ref{cascade-model}) in the following form
	\begin{small}
	\begin{eqnarray}
		\begin{cases}
		\dot{z}_1 ~=~ -Kz_1~+~ \frac{\alpha+1}{\alpha}\frac{2Kx_n}{1+y^{2g}}~+~\frac{K}{\alpha}y-\frac{1}{\alpha}(\delta+1),
		 \\ 
		\dot{x}_2 ~=~ K{z_1}~-~\frac{K}{\alpha}y~-~Kx_2,
		\\
		~~~\cdots\\
		\dot{x}_n~=~ K{x_{n-1}} ~-~ \frac{2Kx_n}{1+y^{2g}}. 
		\end{cases}
		 \label{zero-gly-cascade} 
	\end{eqnarray}
	\end{small}
Let us define 
\begin{equation}
z := \begin{bmatrix}z_1& x_2& \dots& x_n\end{bmatrix}^{\text{T}},
\label{eq:667}
\end{equation}
and 
\begin{equation}
z^*:=\begin{bmatrix}\frac{1}{K}+\frac{1}{\alpha}& \frac{1}{K}& \ldots& \frac{1}{K}\end{bmatrix}^{\text T}.
\label{eq:672}
\end{equation}
Then, we rewrite (\ref{zero-gly-cascade}) in the following form
	\begin{equation}
		\dot{\bar z}~=~A\bar z~+~B\bar y~+~C\delta~+~\bar f(\bar z,\bar y),
		\label{nonlinear-zero-cas}
	\end{equation} 
where
	\begin{eqnarray}
		&&A~=~\left[
          \begin{smallmatrix}
              -K & 0& 0& ~\ldots~ & (1+\frac{1}{\alpha})K \\
               K &-K& 0& ~\ldots~ & 0 \\
               0 & K& -K&~\ldots~&0\\
               & \vdots&  & ~\ddots~ &  \vdots  \\
              0 & 0 & 0 & \ldots & -K \\
          \end{smallmatrix}
          \right],\nonumber  \\
          &&B~=~\left[
             \begin{smallmatrix}
               -\frac{\alpha+1}{\alpha}g+\frac{K}{\alpha} \\
              -\frac{K}{\alpha}  \\
              \vdots  \\
              g\\
            \end{smallmatrix}\right],~C~=~\left[
             \begin{smallmatrix}
              -\frac{1}{\alpha}  \\
              0\\
              \vdots  \\
              0\\
            \end{smallmatrix}\right],
            \label{A-B-C}
          \end{eqnarray}
$\bar z=z-z^*$, $\bar y= y - y^*$, $\bar f(0,0) = 0$ and
	\begin{equation}
		\Big {\|} \frac{\partial \bar f(\bar z,\bar y)}{\partial (\bar z,\bar y)} \Big{\|} ~\leq~ c |(\bar z,\bar y)|,
	\end{equation} 
near the origin in $\R^n$ for $c > 0$. 
Now, according to \cite{Schaft} we know that if the system (\ref{nonlinear-zero-cas}) has $\mathcal L_2$-gain $\leq \gamma$, then the linearized system has $ \mathcal L_2$-gain $\leq \gamma$.
Hence, we only consider the linearized system, \ie
	\begin{equation}
		\dot{\bar z}~=~A\bar z~+~B \bar y~+~C\delta.
		\label{zero-cas}
	\end{equation}
Note that $\lambda=K\left[(\frac{\alpha+1}{\alpha})^{\frac{1}{n}}-1\right]$ is the eigenvalue of $A$ with the greatest real part. And the corresponding left eigenvector of $\lambda$, is $v=\begin{bmatrix}1&  (\frac{\alpha+1}{\alpha})^{\frac{1}{n}}& \ldots& (\frac{\alpha+1}{\alpha})^{\frac{n-1}{n}}\end{bmatrix}^{\text T}$. Now, we consider the following subsystem of (\ref{zero-cas})
	\begin{equation*}
		\dot{\tilde z} = \lambda \tilde z+\Big[\big((1+\frac {1}{\alpha})^{\frac{n-1}{n}}-(1+\frac{1}{\alpha})\big)g-\frac{K}{\alpha}\big( (1+\frac {1}{\alpha})^{\frac{1}{n}} -1\big )\Big ]\bar y-\frac{1}{\alpha}\delta.
	\end{equation*}
Based on the result of \cite{Scherer92} and \cite{Schwartz96}, the formula to compute the optimal value of $\gamma$ reduces to
	\begin{equation*}
		\gamma_L^*~\geq~\frac{1}{\big( K+g\alpha (1+\frac {1}{\alpha})^{\frac{n-1}{n}} \big )\big((1+\frac {1}{\alpha})^{\frac{1}{n}}-1\big)}.
\end{equation*} 
Note that according to Proposition $6$ of \cite{Schaft}, $\gamma_L^*$ is a lower bound for the optimal $\gamma^*$ of the nonlinear system (\ref{cascade-model}).  
\end{proof}

\subsection{Total Output Energy}
It is proven that there exists a size-dependent hard limit on the best achievable ideal performance of system (\ref{cascade-model}).
	
\begin{theorem}\label{theorem-04}
		Suppose that the equilibrium of interest is given by (\ref{eq-point}) and ${u^*}=1$. Then, the $\mathcal{L}_2$-norm of the output  of the unperturbed system (\ref{cascade-model}) cannot be made arbitrarily small, which implies that there is a fundamental limit on performance in the following sense
	\begin{eqnarray}
		&& \hspace{-1.3cm} \int_{0}^{\infty}\big(y(t;u_0)-{y^*}\big)^2~dt~\\ \nonumber
		&&~~~~~~~~~~~~\geq~ \mathbf H(z_0;\alpha,K,g,n)~+J(z_0;\alpha,K,g,n),
	\end{eqnarray} 
where
	\begin{eqnarray}
		&&\hspace{-.5cm}\mathbf H(z_0;\alpha,K,g,n)= \nonumber\\ 
 		&&\frac{\alpha^2K\big ( \frac{1}{\alpha}(y(0)-y^*)+\sum_{i=1}^{n}(\frac{\alpha+1}{\alpha})^{\frac{i-1}{n}}(x_i(0)-x_i^*)\big)^2}{\big((\frac {\alpha+1}{\alpha})^{\frac{1}{n}}-1\big)\big( K+g\alpha (\frac {\alpha+1}{\alpha})^{\frac{n-1}{n}} \big )^2},\nonumber
	\end{eqnarray} 
$u_0$ is an arbitrary stabilizing feedback control law for system (\ref{cascade-model}), $z_0= z(0)-z^*$ where $z$ and $z^*$ are defined by \eqref{eq:667} and \eqref{eq:672} respectively, $J(0;\alpha,K,g,n)=J(z;\alpha,K,0,n)=0$, and $|J(z;\alpha,K,g,n)| \leq c|z|^3$ on an open set $\Omega$ around the origin in $\R^n$.
\end{theorem}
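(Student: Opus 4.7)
The plan is to mirror the strategy used in the proof of Theorem~\ref{theorem-03}, generalized to the $n$-dimensional cascade. First I would perform the canonical coordinate change $z_1 = x_1 + \tfrac{1}{\alpha}y$ (leaving $x_2,\ldots,x_n$ unchanged) so that the autocatalytic system~(\ref{cascade-model}) is recast in the form $(\dot{\bar z},\dot{\bar y})$ already displayed in~(\ref{nonlinear-zero-cas}) during the proof of Theorem~\ref{theorem-02}. In these coordinates the $\bar z$-subsystem is precisely the zero-dynamics, with linearization $(A,B,C)$ given by~(\ref{A-B-C}), and the remaining $\bar y$-equation carries the control $u$ and may be inverted as the "output-forcing'' channel. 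This reduces the cheap-control problem for~(\ref{cascade-model}) to a minimum-energy problem for the output $\bar y$ driving the unstable zero-dynamics, which is the standard structure under which hard limits are computed.

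Next I would invoke the power-series expansion of the HJB value function,
\begin{equation*}
\pi(\bar y,\bar z;\epsilon) = \pi^{[2]}(\bar y,\bar z;\epsilon) + \pi^{[3]}(\bar y,\bar z;\epsilon) + \cdots,
\end{equation*}
exactly as in Theorem~\ref{theorem-03}. Substituting into the HJBE and matching terms of equal total degree yields an algebraic Riccati equation for the quadratic coefficient $P(\epsilon)$ of the linearized cheap-control problem associated with $(A,B,C)$. Using the well-known splitting of $P(\epsilon)$ into an $\mathcal{O}(1)$ piece on the unstable subspace of the zero-dynamics and an $\mathcal{O}(\epsilon)$ piece on the rest (cf.~\cite{KwakernaakS72}), the $\epsilon\to 0$ limit of $\pi^{[2]}$ equals $z_0^{\text T} P_0 z_0$, where $P_0$ is supported on the one-dimensional unstable eigenspace of $A$. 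Higher-order HJB equations recursively determine $\pi^{[k]}$ for $k\ge 3$; each admits the same decomposition $\pi^{[k]} = \pi_0^{[k]}(\bar z) + \epsilon\pi_1^{[k]}(\bar y,\bar z) + \mathcal{O}(\epsilon^2)$, and together they contribute the residual $J(z_0;\alpha,K,g,n)$ that vanishes cubically at the origin and vanishes identically when $g=0$ (since then the nonlinear feedback term in~(\ref{nonlinear-zero-cas}) disappears).

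The central computation is to identify $P_0$ explicitly. For this I would use the eigenstructure of $A$ already obtained in the proof of Theorem~\ref{theorem-02}: the dominant (unstable) eigenvalue is $\lambda = K[(\tfrac{\alpha+1}{\alpha})^{1/n}-1]$ with left eigenvector $v=[1,(\tfrac{\alpha+1}{\alpha})^{1/n},\ldots,(\tfrac{\alpha+1}{\alpha})^{(n-1)/n}]^{\text T}$. Projecting the linearized zero-dynamics onto $v$ gives a scalar unstable mode $\tilde z = v^{\text T}\bar z$ driven by the output $\bar y$ with effective input gain $b := v^{\text T} B = \bigl((\tfrac{\alpha+1}{\alpha})^{(n-1)/n}-(1+\tfrac{1}{\alpha})\bigr)g - \tfrac{K}{\alpha}\bigl((\tfrac{\alpha+1}{\alpha})^{1/n}-1\bigr)$. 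The scalar cheap-control minimum-energy formula $p_0 = \lambda/b^2$ (the same one that produced $\tfrac{\alpha^3 k}{(g\alpha+k)^2}$ in Theorem~\ref{theorem-03}) then yields the coefficient
\begin{equation*}
p_0 = \frac{\alpha^2 K}{\bigl((\tfrac{\alpha+1}{\alpha})^{1/n}-1\bigr)\bigl(K+g\alpha(\tfrac{\alpha+1}{\alpha})^{(n-1)/n}\bigr)^2},
\end{equation*}
after simplification. Writing the initial condition in the projected coordinate as $\tilde z_0 = v^{\text T}\bar z_0 = \tfrac{1}{\alpha}(y(0)-y^*) + \sum_{i=1}^{n}(\tfrac{\alpha+1}{\alpha})^{(i-1)/n}(x_i(0)-x_i^*)$ delivers precisely $\mathbf{H}(z_0;\alpha,K,g,n)$.

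The main obstacle I anticipate is the bookkeeping around the projection onto the single unstable eigenvector and the verification that the $\mathcal{O}(\epsilon)$ pieces of $P(\epsilon)$ make no contribution to the limit; for a multi-dimensional zero-dynamics this requires the classical Kwakernaak--Sivan splitting rather than the ad hoc scalar argument that sufficed for Theorem~\ref{theorem-03}. The $J$ term then inherits its properties from the smoothness of $\bar f$, the vanishing of the nonlinear coupling when $g=0$, and the fact that $\pi^{[k]}$ for $k\ge 3$ are at least cubic in $z$, which together give the claimed estimates $J(0;\alpha,K,g,n)=J(z;\alpha,K,0,n)=0$ and $|J(z;\alpha,K,g,n)|\le c|z|^{3}$ on a neighborhood $\Omega$ of the origin.
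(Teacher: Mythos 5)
Your proposal follows essentially the same route as the paper: the same change of coordinates to the zero-dynamics form with $(A,B)$ from (\ref{A-B-C}), the same power-series/HJB expansion with the Kwakernaak--Sivan splitting of $P(\epsilon)$, and the same projection onto the dominant left eigenvector $v$ followed by the scalar formula $\lambda/b^2$, which indeed simplifies to the stated coefficient of $\mathbf H$. The only (harmless) difference is that you assert $P_0$ is supported on a one-dimensional unstable eigenspace, whereas for larger $n$ the matrix $A$ may have several eigenvalues with positive real part; the paper accordingly treats the single-mode computation only as a lower bound on $\tfrac{1}{2}z_0^{\text{T}}P_0 z_0$, which is all the inequality in the theorem requires.
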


\begin{proof}
	The proof of this theorem based on results  from \cite{Albrecht61,Lukes69} and Theorem \ref{theorem-03}. Similar to the proof of Theorem \ref{theorem-02}, one can cast the zero-dynamics of the unperturbed system \eqref{cascade-model} as follows 
	\begin{equation}
		\dot{\bar z}~=~A\bar z~+~B\bar y~+~\bar f(\bar z,\bar y),
		\label{nonlinear-zero-cas}
	\end{equation} 
where $A$ and $B$ are given by \eqref{A-B-C},
$\bar z=z-z^*$, $\bar y= y - y^*$, $\bar f(0,0) = 0$ and
	\begin{equation*}
		\Big {\|} \frac{\partial \bar f(\bar z,\bar y)}{\partial (\bar z,\bar y)} \Big{\|} ~\leq~ c |(\bar z,\bar y)|
	\end{equation*} 
near the origin in $\R^n$ for $c > 0$. We denote by $\pi(y,z;\epsilon)$ the solution of the HJB PDE corresponding to the cheap optimal control problem to the unperturbed system (\ref{cascade-model}). We apply the power series method \cite{Albrecht61,Lukes69} by first expanding $\pi(y,z;\epsilon)$ in series as in \eqref{power_series}, where $\pi^{[k]}(y,z;\epsilon)$ denotes $k$'th order term in the Taylor series expansion of $\pi(y,z;\epsilon)$. Then, (\ref{power_series}) is plugged into the corresponding HJB equation of the optimal cheap control problem. The first term in the series is
	\begin{equation*} 
		\pi^{[2]}(y,z;\epsilon) ~=~ \left[
                               \begin{array}{cc}
                                 y-y^* & z-z^* \\
                               \end{array}
                             \right] P(\epsilon) \left[
                                                   \begin{array}{c}
                                                     y-y^*  \\
                                                     z-z^*  \\
                                                   \end{array}
                                                 \right],
	 \end{equation*}
where $P(\epsilon)$ is the solution of algebraic Riccati equation to the cheap control problem for the linearized model. It can be shown that $P(\epsilon)$ can be decomposed in the form of a series in $\epsilon$ (see \cite{KwakernaakS72} for more details)
	\begin{equation*}
		P(\epsilon)~=~\left[
                  \begin{array}{cc}
                    \epsilon P_1 & \epsilon P_2 \\
                    \epsilon P_2 & P_0+\epsilon P_3 \\
                  \end{array}
                \right] + \mathcal{O}(\epsilon^2)
     \end{equation*}
in which $P_0$ is the positive solution of the associated algebraic Riccati equation for $(A,B)$, \ie
\[ A^{\text{T}} P_0 + P_0 A~=~ P_0 BB^{\text{T}}P_0.\]
It follows that 
\[\pi^{[2]}(y,z;\epsilon)=\frac{1}{2}z_0^{\text{T}} P_0 z_0+\mathcal{O}(\epsilon).\]
 One can obtain governing partial differential equations for the higher-order terms $\pi^{[k]}(y,z;\epsilon)$ for $k \geq 3$ by equating the coefficients of  terms with the same order. It can be shown that 
 \[ \pi^{[k]}(y,z)=\pi_0^{[k]}(z) + \epsilon \pi_1^{[k]}(y,z)+\mathcal{O}(\epsilon^2)\] for all $k \geq 3$.  Then, by constructing approximation of the optimal control feedback by using computed Taylor series terms, one can prove that $\pi(y,z;\epsilon) \rightarrow \frac{1}{2}z_0^{\text{T}} P_0 z_0+(\text{higher order terms in } z_0)$ as $\epsilon \rightarrow 0$. Thus, the ideal performance cost value can be written as  
 \begin{equation}
 \lim_{\epsilon \rightarrow 0}\pi(y,z;\epsilon)~=~\frac{1}{2}z_0^{\text{T}} P_0 z_0+J(z_0;\alpha,K,g,n).
 \label{eq:835}
 \end{equation}
Next, we obtain a lower bound on $\frac{1}{2}z_0^{\text{T}} P_0 z_0$. The characteristic equation  of matrix $A$ is characterized by 
\[ (x+K)^{n}-\frac{\alpha +1}{\alpha}K^n~=~0.\]
 Therefore, one can see that $\lambda=K\left[(\frac{\alpha+1}{\alpha})^{\frac{1}{n}}-1\right]$ is the eigenvalue of $A$ with the greatest real part and its corresponding left eigenvector is $v=\begin{bmatrix}1&  (\frac{\alpha+1}{\alpha})^{\frac{1}{n}}& \ldots& (\frac{\alpha+1}{\alpha})^{\frac{n-1}{n}}\end{bmatrix}^{\text T}$. Now, let us consider the subsystem associated to this mode as follows
	\begin{equation*}
		\dot{\tilde z} = \lambda \tilde z+\Big[\big((\frac {\alpha+1}{\alpha})^{\frac{n-1}{n}}-(\frac{\alpha+1}{\alpha})\big)g-\frac{K}{\alpha}\big( (\frac {\alpha+1}{\alpha})^{\frac{1}{n}} -1\big )\Big ] \bar y,
	\end{equation*}
where 
\begin{equation}
 \tilde z~=~ v^{\text{T}} \bar z ~=~ \frac{1}{\alpha} \bar y + \sum_{i=1}^n \left( \frac{\alpha+1}{\alpha}\right)^{\frac{i-1}{n}} \bar x.
 \label{eq:847}
 \end{equation}
The corresponding cost value for this subsystem is given by  
\begin{equation}
 \frac{1}{2}z_0^{\text{T}} P_0 z_0 ~\geq~  \frac{\alpha^2K \tilde z(0)^2}{\big((\frac {\alpha+1}{\alpha})^{\frac{1}{n}}-1\big)\big( K+g\alpha (\frac {\alpha+1}{\alpha})^{\frac{n-1}{n}} \big )^2}
 \label{eq:853}
 \end{equation}
which is a lower bound for the linearized cost $ \frac{1}{2}z_0^{\text{T}} P_0 z_0$.
 Finally, using \eqref{eq:835}, \eqref{eq:847} and \eqref{eq:853}, we get the desired result. 
 
\end{proof}

In the case that the number of intermediate reactions is one (\ie $n=1$) the results of Theorems \ref{theorem-02} and \ref{theorem-04} reduce to the results of Theorems \ref{theorem-01} and \ref{theorem-03}, respectively.
%
%
Through a straightforward analysis, one can argue that $\mathbf H(z_0;\alpha,K,g,n) \in \mathcal{O}(n)$ and $\mathbf{\Gamma}(\alpha,K,g,n) \in \mathcal{O}(n)$, and they can be approximated by
	\begin{eqnarray}
		&&\hspace{-1cm}\mathbf H(z_0;\alpha,K,g,n) \approx \nonumber \\
		&&~\frac{\alpha^2K\big ( \frac{1}{\alpha}(y(0)-y^*)+{\sum_{i=1}^{n}(\frac{\alpha+1}{\alpha})^{\frac{i-1}{n}}(x_i(0)-x_i^*)}\big)^2}{\big( K+g(\alpha+1)  \big )^2 \ln(\frac{\alpha+1}{\alpha})}n\nonumber
	\end{eqnarray} 
and
	\begin{equation}
		\mathbf{\Gamma}(\alpha,K,g,n) ~\approx~ \frac{n}{ \big (g(\alpha+1)+K \big )\ln (1+\frac{1}{\alpha})}.
		\label{33}
	\end{equation}	
This implies that as the number of intermediate reactions $n$ grows, the price paid for robustness for both $\mathbf H(z_0;\alpha,K,g,n)$ and $\mathbf{\Gamma}(\alpha,K,g,n)$ increases linearly by network size $n$. In general, the larger the number of intermediate reactions involved in the breakdown of a metabolite, the less complex the enzymes involved in the individual reactions need to be. On the other hand, increasing the number of intermediate metabolites results in larger $\mathbf {\Gamma}$ and $\mathbf H$ which means less robustness to disturbances and having undesirable transient behavior.


\section{General Autocatalytic Pathways}
\label{sec2}
In the final step, we turn our focus on networks with autocatalytic structures (as shown in Fig. \ref{fig_3}) that  belong to a class of nonlinear dynamical networks with cyclic feedback structures driven by {disturbance}. Each network consists of a group of nonlinear subsystems with state-space dynamics
\begin{eqnarray}
&&\left\{ \begin{array}{rcl}
     \dot x_i &=&- f_i (x_i) + u_i \\
	y_i& =& g_i(x_i)
	\label{eq7}	
\end{array}\right.
\end{eqnarray}	
for $x_i \geq 0$, $y_i \geq 0$, $1\leq i \leq n$, and               
\begin{eqnarray}	               
&&\left\{ \begin{array}{rcl}
\dot x_{n+1} &=& -f_{n+1}(x_{n+1}) + u_{n+1}  - \alpha u, \\
y_{n+1} &=&  u, 
                \end{array}\right.
                	\label{eq8}
\end{eqnarray}
where $f_i(\cdot)$ and $g_i(\cdot)$ for $i=1,\ldots,n$ are increasing functions. Moreover,   $u_i(t)$, $y_i(t)$ and $x_i(t)$ are input, output and state variables of each subsystem, respectively. 
These assumptions are suitable for a broad class of chemical kinetics models such as Michaelis-Menten and mass-action. The state-space representation of the nonlinear cyclic interconnected network shown in Fig. \ref{fig_3} is given by 
\begin{eqnarray}
\begin{cases}
	\dot x_1 ~=~ -f_1 (x_1) + y_{n+1}, \\ 
	\dot x_2 ~=~ -f_2 (x_2) + y_1,\\
			~~~~\cdots~  \\
	\dot x_{n+1} ~=~ -f_{n+1} (x_{n+1})+ y_{n} - \alpha u +\delta,\\
	y ~=~ x_{n+1}. 
	\end{cases}	
\label{eq88}
\end{eqnarray}

\begin{assumption}
We assume that $x_i^*$ for $i=1,\ldots,n$ and $y^*$ are equilibrium points of the unperturbed system (\ref{eq88}). Moreover, it is assumed that 
\begin{equation}
a:=f^{\prime}_1(x_1^*)=f^{\prime}_2(x_2^*)= \cdots =f^{\prime}_{n}(x_n^*),
\label{a}
\end{equation}
where $f^{\prime}_i(x_i^*):=\left .\frac{\mathrm d f_i}{\mathrm d x_i} \right |_{x_{i}=x_i^*}$.
\end{assumption}
%
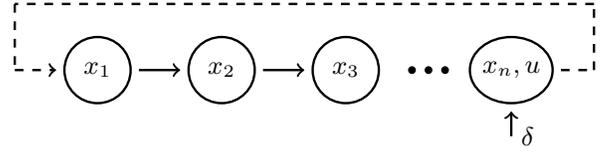
\begin{figure}[t]
        \begin{center}
        \scalebox{1.1}{
\begin{tikzpicture}
\draw [ thick] (-2.5,0) circle [radius=0.4];
\draw [ thick] (-1,0) circle [radius=0.4];
\draw [ thick] (.5,0) circle [radius=0.4];
\draw [  thick] (2.5,0) ellipse (.5 and .4);
\draw [->, thick] (-2,0) -- (-1.5,0);
\draw [->,  thick] (-.5,0) -- (0,0);
\draw [fill] (1.3,0) circle [radius=.04];
\draw [fill] (1.5,0) circle [radius=.04];
\draw [fill] (1.7,0) circle [radius=.04];
\draw [ thick, dashed] (3.1,0) -- (3.5,0);
\draw [ thick, dashed] (3.5,0) -- (3.5,.8);
\draw [ thick, dashed] (3.5,.8) -- (-3.5,.8);
\draw [ thick, dashed] (-3.5,.8) -- (-3.5,0);
\draw [ ->,thick, dashed] (-3.5,0) -- (-3,0);
\node[] at (-2.5,0) {$x_1$};
\node[] at (-1,0) {$x_2$};
\node[] at (.5,0) {$x_3$};
\node[] at (2.5,0) {$x_n,u$};
\draw [ ->,thick] (2.5,-.8) -- (2.5,-.5);
\node[] at (2.7,-.8) {$\delta$};
\end{tikzpicture}}
\end{center}
\caption{{\small The schematic diagram of the nonlinear network (\ref{eq88}) with a cyclic feedback structure with an output disturbance $\delta$ and control input $u$.}}
  	\label{fig_3}
\end{figure}
 
\begin{theorem}
\label{theorem-5}
For cyclic networks (\ref{eq88}), if 
\begin{equation}
r:= \left (\frac{g^{\prime}_1(x_1^*) g^{\prime}_2(x_2^*) \cdots g^{\prime}_{n}(x_n^*)}{\alpha}\right )^{\frac{1}{n}} > a,
\label{eqr}
\end{equation}
 then there exists a hard limit on the best achievable disturbance attenuation (i.e., $\gamma ^{*} >0$)  for system (\ref{eq88}) such that the regional state feedback $\mathcal{L}_2$--gain disturbance attenuation problem with stability constraint is solvable for all $\gamma >  \gamma^{*}$ and is not solvable for all $\gamma < \gamma^{*}$.
Furthermore, the hard limit function is given by
\begin{equation}
\gamma^* \ge \mathbf{\Gamma}(f^{\prime}_{n+1}(y^*),r,a)=\frac{1}{f^{\prime}_{n+1}(y^*)+r-a}.
\label{roboust-3}
\end{equation}
\end{theorem}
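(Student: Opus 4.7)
The plan is to mirror the strategy of Theorem \ref{theorem-02}: linearize about the equilibrium, extract the linearized zero--dynamics, identify its dominant unstable mode, and transport the resulting scalar $\mathcal L_2$--gain lower bound back to the nonlinear system via Proposition~6 of \cite{Schaft}. The cyclic feedback structure will supply an unstable eigenvalue exactly under hypothesis (\ref{eqr}), and the $-\alpha u + \delta$ coupling in the last equation lets me eliminate $\bar u$ along a single linear combination of states so that $\bar\delta$ lands directly on an unstable scalar channel.

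First I would introduce $\bar z_1 = (x_1-x_1^*) + (y-y^*)/\alpha$, $\bar z_i = x_i - x_i^*$ for $i=2,\ldots,n$, and $\bar y = y-y^*$, exactly as in Theorems \ref{theorem-01}--\ref{theorem-04}; this cancels $\bar u$ from $\dot{\bar z}_1$. Using Assumption~1, so that $f'_i(x_i^*)=a$ for $i\le n$, the linearized zero--dynamics acquires the form
\[
\dot{\bar z} \;=\; A\bar z \,+\, B\bar y \,+\, C\bar\delta \,+\, \bar f(\bar z,\bar y),
\]
where $A$ is the $n\times n$ cyclic matrix with $-a$ on the diagonal, $g'_i(x_i^*)$ in position $(i+1,i)$ for $i=1,\ldots,n-1$, and the wrap--around entry $g'_n(x_n^*)/\alpha$ in position $(1,n)$; $C = (1/\alpha,0,\ldots,0)^{\text{T}}$; and $B$ has only two nonzero components, $B_1 = (a-f'_{n+1}(y^*))/\alpha$ (from the $-f'_{n+1}(y^*)\bar y/\alpha$ contribution in $\dot{\bar z}_1$) and $B_2 = -g'_1(x_1^*)/\alpha$ (from substituting $\bar x_1 = \bar z_1 - \bar y/\alpha$ into $\dot{\bar x}_2$).

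The characteristic polynomial of $A$ factors cyclically as $(\lambda+a)^n = r^n$, so the eigenvalue with largest real part is $\lambda_1 = r-a$, strictly positive by (\ref{eqr}). A direct computation gives the left eigenvector recursion $v_{i+1} = (r/g'_i(x_i^*))\,v_i$, with the wrap--around constraint $(g'_n(x_n^*)/\alpha)\,v_1 = r\,v_n$ automatically consistent because $\prod_{i=1}^n g'_i(x_i^*) = \alpha r^n$ by definition of $r$. Projecting onto $\tilde z = v^{\text{T}}\bar z$ produces the scalar unstable subsystem
\[
\dot{\tilde z} \;=\; (r-a)\,\tilde z \,+\, b\,\bar y \,+\, c\,\bar\delta,
\]
with $c = v_1/\alpha$ and, using $v_2 = r v_1/g'_1(x_1^*)$, $b = v_1 B_1 + v_2 B_2 = (v_1/\alpha)\bigl(a - f'_{n+1}(y^*) - r\bigr)$. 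The Scherer--Schwartz formula \cite{Scherer92,Schwartz96} applied to this scalar channel yields $\gamma_L^* \ge |c/b| = 1/(f'_{n+1}(y^*)+r-a)$, and Proposition~6 of \cite{Schaft} lifts this to $\gamma^* \ge \gamma_L^*$, which is exactly (\ref{roboust-3}).

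The main obstacle is the bookkeeping around the change of variables: I must verify that the transformation eliminates $\bar u$ \emph{only} from the $\dot{\bar z}_1$ row, without injecting spurious $\bar y$ or $\bar\delta$ terms elsewhere, so that $B$ retains the claimed two--component structure and $A$ retains its pure cyclic form. Once that is in place, the cyclic spectrum of $A$, the recursion for $v$, and the final cancellation leading from $v_1 B_1 + v_2 B_2$ to $(v_1/\alpha)(a - f'_{n+1}(y^*) - r)$ are essentially routine.
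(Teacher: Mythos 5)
Your proposal is correct and follows essentially the same route as the paper's own proof: the change of variable $z_1 = x_1 + \frac{1}{\alpha}x_{n+1}$, the cyclic linearized zero-dynamics with the same $A_0$, $B_0$, $C_0$, the eigenvalue $\lambda_1 = r-a$ with its left eigenvector recursion, projection onto the unstable scalar mode, the Scherer--Schwartz formula, and the lift to the nonlinear system via Proposition~6 of \cite{Schaft}. The only addition you make is spelling out the cancellation $v_1B_1 + v_2B_2 = (v_1/\alpha)(a - f'_{n+1}(y^*) - r)$, which the paper states without computation.
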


\begin{proof}
In the first step, we introduce a new auxiliary variable $z_1=x_1+\frac{1}{\alpha} x_{n+1}$.  We can cast the linearized zero-dynamics of (\ref{eq8}) in the following form
\begin{equation}
\hspace{-0.65cm} \dot{z} ~= ~A_0 z~+~B_0 y~ +~ C_0 \delta,\label{zero-gly-cascade-2} 
\end{equation}
where $z=[z_1,x_2,\cdots,x_{n}]^{\T}$,
\begin{eqnarray}
	&&A_0=\left[
		\begin{matrix}
			-a & 0  & \ldots & 0 & {\alpha}^{-1} g^{\prime}_{n}(x_n^*) \\
			g^{\prime}_1(x_1^*) & -a & \ldots & 0 & 0 \\
			\vdots & \vdots  & \ddots &\vdots  & \vdots &  \\
			0 & 0 & \ldots & -a & 0 \\
			0 & 0 & \ldots & g^{\prime}_{n-1} (x_{n-1}^*)& -a 
		\end{matrix}
		\right],\nonumber  \\
	&&B_0=\left[
		\begin{matrix}
			\frac{a-f^{\prime}_{n+1}(y^*)} {\alpha} \\
			-\frac{g^{\prime}_1(x_1^*)} {\alpha}  \\
			\vdots  \\
			0\\
			0
		\end{matrix}
		\right],~\text{and}~~C_0=\left[
		\begin{matrix}
			{\alpha}^{-1} \\
			0\\
			\vdots  \\
			0\\
			0
		\end{matrix}
		\right].
\end{eqnarray}
Then, we consider the characteristic equation of matrix $A_0$ which is given by 
\begin{equation}
 (\lambda+a)^{n}-r^{n}~=~0.
\label{charc}
 \end{equation}
From (\ref{eqr}) and (\ref{charc}), it follows that $\lambda_1=r-a$ is the eigenvalue of $A_0$ with the largest real-part value with left eigenvector  
\[v_1=\Big [\,1~,~  \frac{r}{g^{\prime}_1(x_1^*)}~, ~\ldots~,~ \frac{r^{n-1}}{g^{\prime}_1(x_1^*)g^{\prime}_2(x_2^*)\cdots g^{\prime}_{n-1}(x_{n-1}^*)}\,\Big]^{\T}.\] 
The unstable subsystem of (\ref{zero-gly-cascade-2}) is characterized by 
\begin{equation}
\dot{z}~=~\lambda_1 z
\, + \, {\alpha}^{-1}  \left (a-f^{\prime}_{n+1}(y^*)\, - \, r \right  )y \, + \, {\alpha}^{-1} \delta.
\label{zero-case}
\end{equation}
From the results of \cite{Scherer92} and \cite{Schwartz96}, the formula to compute the optimal value of $\gamma$ reduces to
\begin{eqnarray}
\gamma^*_L = \frac{1}{f^{\prime}_{n+1}(y^*)+r-a}.
\end{eqnarray} 
We emphasize that according to \cite[Proposition $6$]{Schaft}, $\gamma_L^*$ is a lower bound for the optimal $\gamma^*$ for the nonlinear system (\ref{eq88}).  
\end{proof}


\section{Examples} 
We apply our results to metabolic pathway \eqref{reaction-two} and quantify its existing hard limits. We assume that the second reaction  in \eqref{reaction-two} has no ATP feedback ATP on PK, \ie $g=0$. We consider two scenarios for the consumption rate $R_{\rm CONS}$; in the first example, we assume the product $y$ is consumed by basal consumption rate  $1+\delta$, and then, in the second example, we consider the case where the consumption rate depends on $y$. 

\begin{example}
Let us consider the minimal representation of autocatalytic glycolysis pathway given by \eqref{reaction-two}. It is assumed that the second reaction  in \eqref{reaction-two} has no ATP feedback ATP on PK, \ie $g=0$. Then, we can rewrite \eqref{gly-two} as follows 
\begin{eqnarray}
	\dot{x}_1 & = & \frac{2 y^{a}}{1+ y^{2h}} ~-~ k x_1,\\
	\dot{y} & = &-\alpha \frac{2 y^{a}}{1+ y^{2h}} ~+~ (\alpha+1) k x_1 ~-~ (1+\delta),
	\label{gly-two-ex}
\end{eqnarray}
for $x_1 \geq 0$ and $y \geq 0$. 
By considering expression $\frac{2 y^{a}}{1+ y^{2h}}$ as the regulatory feedback control employed by nature that captures inhibition of the catalyzing enzyme, a control system model for glycolysis can be obtained as follows
\begin{eqnarray}
\dot{x}_1 & = & -k \,x_1  +  u,  \label{cont-glyex1}\\
\dot{y} & = &  (\alpha+1) k \, x_1 -  {\alpha}\, u -1 -\delta, \label{cont-glyex2}
\end{eqnarray}
where $u$ is the control input. Using \eqref{cont-glyex1}-\eqref{cont-glyex2} and Theorem \ref{theorem-5}, it follows that 
 \begin{equation}
\gamma ~>~ \frac{\alpha}{k},
\label{ex-gain}
\end{equation}
 where the equilibrium point of the unperturbed system is given by $x_1=1/k$ and $y=1$.   As we expected \eqref{ex-gain} is consistent with the result of Theorem \ref{theorem-01}. 
 \end{example}
 
 \begin{example}
 \label{ex-2}
Let us now consider the minimal representation of autocatalytic glycolysis pathway represented by \eqref{reaction-two} with consumption rate depending on $y$ that is given by
  \[ R_{\rm CONS}~=~ k_y y + \delta.\]
We refer to \cite{BuziD10} for a complete discussion. 
 Then, a set of ordinary differential equations that govern the changes in concentrations $x_1$ and $y$ can be written as
\begin{eqnarray*} 
 	\dot{x}_1 & = &-k\, x_1 ~+~ \frac{2 y^{a}}{1+ y^{2h}}, \\
	\dot{y} & = & -\alpha \frac{2 y^{a}}{1+ y^{2h}} ~+~ (\alpha+1)k \, x_1~-~ \left ({k_y y+\delta}\right ),
 \end{eqnarray*}
for $x_1 \geq 0$ and $y \geq 0$. The exogenous disturbance disturbance input is assumed to be $\delta \in \mathcal{L}_2([0,\infty))$. To highlight fundamental tradeoffs due to autocatalytic structure of the system, we normalize the concentration such that {steady-states become 
\begin{equation}
{y}^*=1~~\text{and}~~~{x_1}^*=\frac{k_y}{ k}. 
\label{eq:1062}
\end{equation}}
As we discussed earlier, one may consider expression $\frac{2 y^{a}}{1+ y^{2h}}$ as the regulatory feedback control employed by nature that captures inhibition of the catalyzing enzyme. Hence, we can derive a control system model for glycolysis as follows
\begin{eqnarray}
\dot{x}_1 & = & -k \,x_1  +  u,  \label{cont-gly12}\\
\dot{y} & = &  (\alpha+1) k \, x_1 -  {\alpha}\, u -k_y \, y -\delta, \label{cont-gly22}
\end{eqnarray}
where $u$ is the control input.
Now, applying Theorem  \ref{theorem-5} to this model, it follows  that
\begin{equation}
\gamma ~>~ \frac{\alpha}{k+\alpha k_y}.
\label{trade}
\end{equation}
Equation (\ref{trade}) illustrates a tradeoff between robustness and efficiency (as measured by complexity and metabolic overhead).  From (\ref{trade}) the glycolysis mechanism is more robust efficient if $k$ and $k_y$ are large. On the other hand, large $k$ requires either a more efficient or a higher level of enzymes, and large $k_y$ requires a more complex allosterically controlled PK enzyme; both would increase the cell's metabolic load. 
We note that the existing hard limit is an increasing function of $\alpha$. This implies that increasing $\alpha$ (more energy investment for the same return) can result in worse performance. 
It is important to note that these results are consistent with results in \cite{BuziD10}, where a linearized model with a different performance measure is used.  
\end{example}

\section{Conclusion}\label{sec:conclusion}
The primary goal of this paper is to characterize fundamental limits on robustness and performance of a class of dynamical networks with autocatalytic structures. A simplified model of Glycolysis pathway is considered as the motivating application. We explicitly derive hard limits on the best achievable performance of the autocatalytic pathways with intermediate reactions which are characterized as $\mathcal L_2$-norm of the output as well as $\mathcal L_2$-gain of disturbance attenuation. 
Then, we explain how these resulting hard limits lead to some fundamental tradeoffs. For instance, due to the existence of autocatalysis in the system, a fundamental tradeoff between a notion of fragility (e.g., cell death)  and net production of the pathway emerges. Moreover, it is shown that as the number of intermediate reactions grows, the price paid for robustness increases. On the other hand, the larger the number of intermediate reactions involved in the breakdown of a metabolite, the less complex the enzymes involved in the individual reactions need to be. This illustrates a tradeoff between robustness and efficiency as measured by complexity and metabolic overhead.

\begin{spacing}{1}
\bibliographystyle{IEEEtran}
\bibliography{IEEEabrv,ref/TAC-bib-milad-feb2015} 
\end{spacing}

\end{document}